     \def\section{\@startsection{section}{1}%
     \z@{.7\linespacing\@plus\linespacing}{.5\linespacing}%
     {\bfseries
     \centering
     }}
     \def\@secnumfont{\bfseries}
    \def
        \z@\z@{-\fontdimen2\font}%
        {\normalfont\bfseries}
\newcommand{\E}{\mathbb E}
\newcommand{\Pb}{\mathbb P}
\newcommand{\cf}{\mathcal F}
\newtheorem{theorem}{Theorem}[section]
\theoremstyle{definition}
\theoremstyle{remark}
\newtheorem{remark}{Remark}
\numberwithin{equation}{section}
\begin{document}
\title[Inference for a stochastic generalized logistic differential equation]{Statistical inference for a stochastic generalized logistic differential equation}

\author[F. Baltazar-Larios]{Fernando Baltazar-Larios} 
\address{  {\it F. Baltazar-Larios}.
		Facultad de ciencias - Universidad Nacional Aut\'onoma de M\'exico.  
		Ciudad Universitaria, Ciudad de M\'exico}

\author[F. Delgado-Vences]{Francisco Delgado-Vences}
\address{{\it F. Delgado-Vences\footnote{Corresponding author} }. Conacyt  Research Fellow - Universidad Nacional Aut\'onoma de M\'exico. Instituto de Matem\'aticas, Oaxaca, M\'exico}
\email{delgado@im.unam.mx}

\author[S. Diaz-Infante]{Saul Diaz-Infante} 
\address{{\it S. Diaz-Infante}. Conacyt Research Fellow - Universidad de Sonora - M\'exico}

\textit{}\author[E. Lince]{Eduardo Lince Gomez} 
\address{  {\it E. Lince G\'omez}.
		Facultad de ciencias - Universidad Nacional Aut\'onoma de M\'exico.  
		Ciudad Universitaria, Ciudad de M\'exico}

\begin{abstract}

The aim of this paper is to study a method to estimate three parameters in a stochastic generalized logistic differential equation (SGLDE) with a random initial value. We assume that the intrinsic growth rate and a shape parameter are constants but unknown, then we use maximum likelihood estimation to estimate these two parameters.  In addition, we prove that these two estimators are strongly consistent as the observation time window increases, i.e., as $T$ grows. The diffusion parameter is estimated using the quadratic variation,  which we know is an unbiased estimator. The method was tested by assigning fixed values to the three parameters under two different scenarios of available data (complete and incomplete data). In the incomplete case, we use an  Expectation Maximization (EM) algorithm. The solutions were simulated using the closed solution of the SLDE. Afterward, our method was employed to estimate the parameters and to investigate a form of empirical convergence toward the true value. 


\end{abstract}

\maketitle

\medskip\noindent

\medskip\noindent
{\bf Keywords:} Maximum Likelihood Estimation; stochastic generalized logistic differential equation;  EM algorithm; Diffusion bridges; biological growth.

\allowdisplaybreaks

\section{Introduction and contribution}

The deterministic generalized logistic differential equations (GLDE), also known as the Richards equation, is an extension of the logistic  model, having originally been employed to model biological growth. A deterministic GLDE version was developed by \cite{Rich1959} where the authors presented an extended von Bertalanffy  growth function to model plant data. The model incorporates an additional parameter, known as the shape parameter. This parameter enables the Richards equation to be rendered equivalent to the logistic, Gompertz, or monomolecular equations as outlined in \cite{fr-th-84}. The variation of the shape parameter allows the point of inflection of the curve to be positioned at any value between the minimum and the upper asymptote.

In this paper, we study the problem of estimating the three parameters for a stochastic generalized logistic differential equation (SGLDE). This problem is quite challenging, as we will show, due to the interdependence of the two drift parameters. Therefore, an efficient method is needed to estimate both of them simultaneously.

The SGLDE has been extensively researched by various authors (e.g., Panik  \cite{panik}). Apart from capturing the average characteristics of the data, this model also addresses the individual variability of the population.

\paragraph{Our contribution} 
The deterministic Richards equation is highly relevant for fitting growth data, yet there remains a scarcity of information for calibrating stochastic extensions. To address this gap, we propose a method for adjusting a stochastic generalized logistic model using consistent estimators and the EM algorithm.

The method of estimation we use is as follows. Initially, we estimate the diffusion parameter using quadratic variation. Afterward, we  use the Girsanov theorem to obtain the Radon-Nykodim derivative of the measure generated by the solution. This enables us to derive the Maximum Likelihood estimator for the parameters in the drift component. Furthermore, we establish the strong consistency of these estimators. 

We validate the proposed method by conducting simulations with two scenarios of available data. Specifically, we will consider two observation cases. The first scenario entails the completed data wherein continuous observations of the solution's paths in SGLDE are available, which can be regarded as an unrealistic case. The second scenario represents a more realistic situation where we only have observations of the path at discrete points in time. Certainly, in the second case, we can conceive this scenario as a derivative of incomplete observation in the continuous case. We suggest using a stochastic version of the EM algorithm to fill the gap between the observations and combine it with the estimators to complete the data, in some sense.

We hereby present several simulation examples aimed at calibrating our methodology. These examples incorporate various values for all parameters under consideration.\\

\paragraph{Logistic deterministic equation} 
This subsection is devoted to a brief review of the logistic equation. 

The logistic growth model, an ordinary differential equation defined as 
\begin{equation}
    \label{eqn:logistic_verhulst}
    X^{\prime}(t) = 
    \alpha X(t)
    \left[
    1-\frac{X(t)}{K}
    \right],
\end{equation}
is one of the most important models in population dynamics.
Proposed by Verhulst in 1838 (see e.g.,  \cite{Vogels1975})
it remains a classic model for studying
population growth and as a cornerstone in more complex structures.
The scope of this application encompasses various fields, such as 
economics, finance, chemistry, and biology, among others.

 After Verhulst first presented it as a contribution to the law of
    population growth, the well-known  
exponential growth model
$
    X'(t) = \alpha X(t)
$; the model has been rediscovered and
    generalized in several fields. Richards conducted in 1959 a 
    well-known generalization \cite{Rich1959}.
    In this piece of research, Richards adds the shape parameter $m$ to
    extend the von Bertalanffy  growth function to fit plant data. Richards
    reformulates the  model
    \cref{eqn:logistic_verhulst} as 
    \begin{equation}\label{det-GLDE}
        \left.
            \begin{aligned} 
                X'(t)&= 
                    \alpha X(t)
                        \Bigg[
                            1 - 
                            \Bigg(
                                \frac{X(t)}{K}
                            \Bigg) ^ m
                        \Bigg],
                    \quad t>t_0
                \\
                    &X(t_0)= x_0,
                    \quad
                  x_0\in(0,1).
            \end{aligned}
        \right\}
    \end{equation}

According to \cref{det-GLDE}, if the initial population size $x_0$ 
is positive and less than the carrying capacity $K$, then  $X(t)$ could be interpreted as 
the population size at time $t$, that grows at rate  $\alpha$ and is limited by 
a finite number $K$ of natural resources.
By introducing the new parameter $m$, the logistic sigmoid shape of the 
previous model \eqref{eqn:logistic_verhulst} achieves more plasticity and 
expands its fitting range. Indeed, for values of $m$ in $(0,1)$, the sigmoid 
function tends towards a linear activation function; while for $m>1$, the 
sigmoid shape becomes similar to the Heaviside step function. Moreover, 
according to \cite{Rich1959} (see also  \cite{fr-th-84}), this generalization encompasses different 
values of $m$, which accounts for well-known growth modes such as the 
monomolecular, auto-catalytic, and Gompertz. 

Certain studies suggest that the logistic Richards' model yields more 
precise results when fitting real data. For instance, Qin et al. \cite{Qin2018} observed 
better performance with the Richards' model \eqref{det-GLDE} than with the 
Verhulst model \cref{eqn:logistic_verhulst} when calibrating a database of 
monthly smoothed international sunspot numbers.

\Cref{det-GLDE} is a particular case of the Bernoulli differential 
equations. When we substitute $u$ by $X^{1-m}$ in \cref{det-GLDE}, 
it becomes a linear equation that we can solve using an integral factor (see cf. \cite{br-go}). This argument leads us to the following solution
\begin{equation}
    X(t)
        =\frac{K}{(1+Q e^{-\alpha m (t-t_0)})^{1/m}},
    \qquad  
    Q:=
    \left(
    \frac{K}{x_0}
    \right)^m - 1.
\end{equation}

Assuming this model represents biological growth, then we can fix parameter $K$ using 
prior biological experience or just by considering the maximum value of actual 
observations. Thus, without loss of generality, we can rescale the dynamics of 
the equation in terms of this 
parameter. When we rescale equation \cref{det-GLDE} in terms of $K$, we 
get dynamics that fall within the interval $(0,1)$, with two fixed points at 
$X^{\star}=0$ and $X^{\star \star}=1$. The point $X^{\star}$ is unstable, while 
the point $X^{\star \star}$ is asymptotically stable for any initial condition 
$x_0 \in (0,1)$.\\

\paragraph{Literature Review of non-deterministic GLDEs}
Several contributions have reported formulations that are 
vaguely related to the model presented in equation \eqref{SGLDE}. 
From a theoretical standpoint, Suryawan \cite{su} assesses various qualitative aspects of the solution, such as long-time behavior and noise-induced transition, 
using the framework of Ito's stochastic calculus. Cortes et al. 
\cite{Cortes2019} study a random logistic type differential equation and they obtain the probability density of the underlying solution 
processes using the Karhunen–Lo\`{e}ve expansion and other transformations. 
Most of the literature focuses on the  model \eqref{SGLDE} with $m=1$. Other 
authors such as \cite{Liu2013, Braumann2008} document the qualitative 
behavior of the solution process using a different form of diffusion term. 
Nevertheless, \cite{Schurz2007} 
presents a theory and exhaustive numerical analysis for other linear 
generalizations in the drift and nonlinear terms in the diffusion term.\\

Inference for SDEs has become a well-established research area in modern times. For comprehensive coverage, one can consult, for instance, the works of \cite{iacus}, and \cite{panik}, as well as the review article by \cite{cr-he-li-sc}, which provides valuable insights into this field, along with references for further exploration.

To the best of our knowledge, only \cite{Bevia2023} provides details on 
parameter calibration using real data for a non-deterministic GLDE;
however, the authors use a model based on a random differential 
equation (RDE) instead of a stochastic differential equation;
which are different. Indeed, in the RDE they studied, the source 
of randomness comes from the initial condition and the growth rate,
which is assumed to be a bounded random variable rather than a stochastic process.

In \cite{del-bal}, two of the authors of this manuscript have studied statistical inference for a stochastic logistic differential equation driven by an affine noise; that means they have considered the model \cref{SGLDE} with $m=1$ (see below). Indeed, they have used the same type of ideas we use in our work to obtain the maximum likelihood estimator (MLE) of the parameter $\alpha$. They have also shown the consistency and asymptotically normality of the MLE. In addition, they have applied the model to actual data. That work is the inspiration to the model extension considered in this paper, that is the case $m$ not necessarily equal to $1$. Thus, our contribution seems to be highly pertinent and relevant.\\


This paper is organized as follows. In section \ref{sec-SGLDE} we introduce the model we consider in this work, which is given by an SDE driven by a particular multiplicative noise, which is usually called affine noise . In Section \ref{sec-EstPAR} we study the MLEs for the two parameters in the drift term of the SGLDE, we prove that these are strongly consistent estimators and the estimator for the diffusion parameter by using the quadratic variation process. Afterward, we study the MLE in the two scenarios described above. Section \ref{simul_study} is devoted to showing numerical experiments using the results and methods from Section \ref{sec-EstPAR}.  Section \ref{sec-Conclusions}  has the concluding remarks on the work and future related work, while in the Appendix we show the consistency of the MLEs.

\section{A stochastic generalized logistic differential equation}
\label{sec-SGLDE}
In this section, we introduce the SDE that serves us as a stochastic model.
  
\paragraph{Stochastic extension}
The incorporation of uncertainty into mathematical models yields enhanced performance when applied to real-world data. Then, in the case of deterministic GLDE recently appeared works that extend this model to
random differential equations and stochastic differential equations. For further reading, we refer the reader to
\cite{Cortes2019}, \cite{Bevia2023}, \cite{Schurz2007}, \cite{su} among others.

In this paper, we focus on the stochastic logistic model driven by an affine noise: 
\begin{equation}\label{SGLDE}
  \left.
        \begin{aligned}
            dX(t) &= 
                \alpha X(t)\Bigg[1-\Bigg(\frac{X(t)}{K}\Bigg)^m\Bigg]dt 
                + 
                \sigma X(t) dB(t),
                \qquad t>t_0,\\
                X(t_0)&= x_0.
        \end{aligned}
        \right\}
\end{equation}
Here, $x_0 \in (0, 1)$ is the initial condition that we could assume that is a bounded absolutely continuous random variable, and $\{B(t)\}_{t\geq t_0}$ 
is a standard Brownian motion and the solution process $X(t)$ 
is understood under the usual conditions. That is, we assume that
solution $X$ and $x_0$ are defined on a common filtrated 
probability space 
$
    (
        \Pb, 
        \Omega,
        \cf
    )
$, with filtration $\cf =\{ \cf \}_{t\geq 0}$ generated by the Brownian process
$B(t)$. \Cref{SGLDE} is a Itô-stochastic differential equation (SDE) driven by a
multiplicative noise.



By using It\^o's formula we obtain the closed solution to \eqref{SGLDE}
(see \cite{su} for further reading) 

\begin{equation}
    \label{sol-SGLDE}
    \begin{aligned}
         X(t)=x_0 \,& 
            \exp\Big[  
            \Big(
                \alpha - \frac{1}{2}\sigma^2
            \Big) t 
            + \sigma B(t)\Big]
        \\
            &\, 
            \times 
            \Bigg[
                1 +  
                \Big(
                    \frac{ x_0}{K}
                \Big) ^ m
                \alpha m 
                \int_0^t 
                \exp
                \Big[
                    \Big(
                        \alpha - \frac{1}{2} \sigma^2
                    \Big)s 
                + \sigma 
                    B(s)
                \Big]  ds 
            \Bigg] ^ {-1/m}.
    \end{aligned}    
\end{equation}
We observe that the solution is always positive for all $t\ge 0$.

It is not difficult to see that 
$$
\lim_{t\rightarrow\infty}  \E [X(t)]= K.
$$

Since we are interested in the model for biological growth, 
then we can restrict ourselves to the case $K=1$.

\paragraph{Aim}

We assume that the parameters $\alpha,m$, and $\sigma$ are constants but unknown. 
 In this manner, our first goal is to fit the model to actual data; thus we need to estimate three parameters: 
$\alpha,m$, and $\sigma$. We will fit the model to data with two scenarios of available information. The first scenario will involve complete information, where the path along time is provided for the model fitting process. In the second scenario, we will consider the case of incomplete information, where only a subset of the trajectory is available.

\section{Model and data}\label{sec-EstPAR}
    For simplicity in the notation, we will denote the parameter vector
    $\theta=\{\alpha,m\}$, which 
    is unknown and we denote by $\theta_0$ the true value of the parameter vector. We consider the estimation of $\theta$ based on observations of the process solution of \eqref{SGLDE} sampled in the time interval $[0,T]$ for continuous and discrete cases.
\subsection{Continuous (complete) observation}\label{sec:con_cas}
For $T>0$, we will say that  {\it continuous observation} within the interval  $[0, T]$ is present if we possess a dataset 
$$
\big\{X(t_0)=x_{t_0},X(t_1)=x_{t_1},\ldots,X(t_n)=x_{t_n}\big\}
$$ 
for $0=t_0<t_1\ldots,t_n=T$ with $n$ large enough such that, for $i=1,\ldots,n$, $\Delta_n:=t_i-t_{i-1}=\tfrac{T}{n}$  is close to zero.

Thus, as a first scenario, we assume continuous observation of paths of the diffusion given by \eqref{SGLDE} in the time interval $[0,T]$.  Based on these observations, we provide estimators for $\theta$.

In the first step, we estimate $\sigma$ via quadratic variation and, given the estimator of $\sigma$, we find the MLEs for the other two parameters. We also prove the asymptotic consistency of the MLEs. We could assume that the initial condition $x_0$ is a random variable with some density and that this density is the same for all possible values of $\theta$. However, For illustrative purposes, we will use a fixed value for each simulation in the numerical experiments presented here.


\subsection*{Estimator for $\sigma$} 

Since, for all $T>0$ the quadratic variation of the diffusion process
$\bm{X}=\{X_t\}_{t\in[0,T]}$, solution of equation \eqref{SGLDE}, is given by
$$<\bm X,\bm X>_T\,:=\mbox{lim}_{n\rightarrow\infty}\sum_{i=1}^n\, [X(t_i)-X(t_{i-1})]^2=\int_0^T\sigma^2X^2(t)dt=\sigma^2\int_0^TX^2(t)dt.
$$
Then, we can estimate $\sigma$ by using the quadratic variation. Indeed, the estimator is given by  
\begin{equation}\label{QV-Sigma}
\hat{\sigma}_{QV}=\sqrt{\frac{<\bm X,\bm X>_T}{\int_0^TX^2(t)dt}}\,\approx \sqrt{\frac{2\sum_{i=1}^n(X(t_i)-X(t_{i-1}))^2}{\Delta_n\sum_{i=1}^n (X(t_i)^2+X(t_{i-1})^2)}}.
\end{equation}
\begin{remark}$\hat{\sigma}_{QV}$ from the expression \eqref{QV-Sigma} is an unbiased estimator (see \cite{wei}).\\

\end{remark}

\subsection*{Maximum likelihood estimators for $\alpha$ and $m$}

Denote by $\Pb_{\theta}$ the probability measure on the space of continuous functions $C(0,T)$ generated by $\bm X$. It is known that $\Pb_{\theta}$, and $\Pb_{\theta_0}$ are equivalent for different values of $\theta_0$, and $\theta$ \footnote{Recall that at this point $\sigma$ has already been fixed.} (see \cite{iacus} or \cite{li-sh}). Then, by the Girsanov theorem, the likelihood ratio is 
\begin{align}\label{likelihood}
L_T(\theta|\bm X)=\frac{d\Pb_{\theta}}{ d\Pb_{\theta_0}}(X) & =\exp\Bigg\{ \int_0^T \left(\frac{\alpha(1-X^m(t))}{\sigma^2 X(t)}-\frac{\alpha_0(1-X^{m_0}(t))}{\sigma^2 X(t)}\right) \, dX(t)\nonumber  \\ 
& \qquad\quad -\frac{1}{2} \int_0^T \Bigg(\left(\frac{\alpha^2(1-X^m(t))^2}{\sigma^2}-\frac{\alpha^2_0(1-X^{m_0}(t))^2}{\sigma^2}\right)\Bigg) dt  \Bigg\}.
\end{align} 

The MLEs for the parameters are commonly attained by differentiating the log-likelihood $l_T(\theta|\bm X):=\log(L_T(\theta|\bm X)$ with respect to the parameters. Thus,  the MLE for $\alpha$ is

\begin{equation}\label{MLE-alpha}
\hat{\alpha}_{ML}= \frac{1}{\int_0^T  \big(1-\big(X(t)\big)^m\big)^2   dt  }  \int_0^T \frac{ \big(1-\big(X(t)\big)^m\big)  }{X(t)}  dX(t).  
\end{equation}

Now, we calculate the derivative of $l_T(\theta|\bm X)$ with respect to $m$, and by using 
the MLE $\hat{\alpha}_{ML}$ from \eqref{MLE-alpha}, we obtain

 

 
%
\begin{equation}
    \label{MLE-m}
    \begin{aligned} 
        \frac{\partial l_T(\theta|\bm X)}{\partial m}
            &=
            \left(
                \int_0^T
                    \big(1-\big(X(t)\big)^m\big)^2   
                dt
            \right) 
            \left(
                \int_0^T
                    X(t)^{m-1} (-\log(X))
                    dX(t)
            \right)
            \\
            - &
            \left(
                \int_0^T 
                \frac{
                    \big(1-\big(X(t)\big)^m\big)
                }{X(t)}
                dX(t)
            \right)
            \left(
                \int_0^T
                    X(t)^{m}
                    \big(1-X(t)^m\big)
                    \big(-\log(X) \big)
                dt 
            \right) .
    \end{aligned}
\end{equation}

Define the function 
$g(m) := \frac{\partial l_T(\theta|\bm X)}{\partial m}$. Thus, we find the MLE $\hat{m}_{MLE}$ for $m$ by
computing a positive root of the equation   
\begin{equation}\label{eq:m}
    g(m)=0.   
\end{equation}
%

Due to the non-linear nature of the right-hand side of equation \eqref{MLE-m}, it is impractical to find an explicit form to solve equation \eqref{eq:m}. Henceforth, it is imperative to employ numerical approximations to acquire a solution. Newton's method is widely regarded as one of the most effective approaches for solving equations such as equation \eqref{eq:m}. Given sufficient regularity and a strong initial approximation, or by implementing preconditioning techniques, Newton's method can achieve quadratic order convergence. Thus, we will utilize this method to seek a positive root of equation \eqref{eq:m}.

\begin{remark}
    Observe that in Eq. \eqref{MLE-m} the MLE for the parameter $\alpha$ has been used to obtain that expression. This implies that the two MLE expressions exhibit a mutual dependence. 
\end{remark}



\subsubsection{Consistency of MLEs}
An important contribution of this work is that the estimators for parameters $\alpha$ and $m$ are strongly consistent which is proven in Theorem \ref{theo:cons}.
\begin{theorem}\label{theo:cons}
 The estimators $\hat{\alpha}_{ML}=\widehat{\alpha}_{T}$, $\hat{m}_{ML}= \widehat{m}_{T} $, given by  \eqref{MLE-alpha} and \eqref{eq:m}, are strongly consistent, that is 
  \begin{equation*}
   \lim_{T\rightarrow \infty} \begin{pmatrix}
       \hat{\alpha}_{ML}    
        \\
       \hat{m}_{ML}
    \end{pmatrix} = \begin{pmatrix}
       \alpha_0
        \\
       m_0
    \end{pmatrix}, \qquad \mbox{in probability}.
  \end{equation*}
  
\end{theorem}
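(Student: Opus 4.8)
The plan is to treat the joint estimator $\widehat\theta_T=(\widehat\alpha_T,\widehat m_T)$ as the maximizer of the log-likelihood $l_T(\theta|\bm X)$ and to run the classical argmax (M-estimation) consistency argument for ergodic diffusions. The backbone of that argument is a law of large numbers for additive functionals of $\bm X$, so the first task is to establish that the solution of \eqref{SGLDE} (with $K=1$) is a positive recurrent, ergodic diffusion on $(0,\infty)$ with a unique invariant probability measure $\mu=\mu_{\theta_0}$. I would obtain this from one-dimensional diffusion theory: compute the scale density $s'(x)=\exp\!\big(-\tfrac{2}{\sigma^2}\int^x \tfrac{\alpha_0(1-y^{m_0})}{y}\,dy\big)$ and the speed measure $m(dx)=\big(\sigma^2 x^2 s'(x)\big)^{-1}dx$, then verify via Feller's test that the boundaries $0$ and $\infty$ are non-attracting and that the total speed measure is finite. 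This yields, for every $\mu$-integrable $f$,
\[
\frac{1}{T}\int_0^T f(X(t))\,dt \;\xrightarrow[T\to\infty]{}\; \int f\,d\mu \qquad \text{a.s.}
\]

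The second step is an exact algebraic reduction of the contrast. Writing $\phi_\theta(x):=\alpha(1-x^m)$ and substituting the true dynamics $dX=X\phi_{\theta_0}(X)\,dt+\sigma X\,dB$ into the exponent of \eqref{likelihood}, the drift contribution of the It\^o integral combines with the quadratic $dt$-term to form a perfect square, leaving
\[
l_T(\theta|\bm X)= -\frac{1}{2\sigma^2}\int_0^T \big(\phi_\theta(X(t))-\phi_{\theta_0}(X(t))\big)^2\,dt \;+\; \frac{1}{\sigma}\int_0^T \big(\phi_\theta(X(t))-\phi_{\theta_0}(X(t))\big)\,dB(t).
\]
Dividing by $T$, I would apply the ergodic theorem to the first integral and the strong law of large numbers for continuous martingales to the stochastic integral $M_T(\theta)$: since $\langle M(\theta)\rangle_T=\int_0^T(\phi_\theta-\phi_{\theta_0})^2\,dt\sim T\int(\phi_\theta-\phi_{\theta_0})^2\,d\mu$, one gets $M_T(\theta)/T\to 0$ a.s. Hence, pointwise in $\theta$,
\[
\frac{1}{T}\,l_T(\theta|\bm X)\;\xrightarrow[T\to\infty]{}\; \mathcal L(\theta):=-\frac{1}{2\sigma^2}\int \big(\phi_\theta-\phi_{\theta_0}\big)^2\,d\mu \qquad\text{a.s.}
\]
This requires the moment bounds $\int x^{2m}\,d\mu<\infty$ over the relevant range of $m$, which I would extract from the tail of $\mu$ at $\infty$.

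The limiting contrast $\mathcal L$ is nonpositive and vanishes exactly when $\phi_\theta=\phi_{\theta_0}$ $\mu$-a.e. The identifiability step is to show that, because $\mathrm{supp}\,\mu$ is a nondegenerate subinterval of $(0,\infty)$ and $\alpha_0\neq 0$, the identity $\alpha(1-x^m)=\alpha_0(1-x^{m_0})$ on that set forces $\alpha=\alpha_0$ and $m=m_0$: rewriting it as $\alpha-\alpha_0=\alpha x^m-\alpha_0 x^{m_0}$, the left-hand side is constant while the right-hand side is constant on an interval only for matching exponents and coefficients. Thus $\mathcal L$ has its unique maximizer at $\theta_0$. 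Finally, I would strengthen the pointwise limit to convergence uniform on compact $\theta$-sets (using continuity of $\theta\mapsto\phi_\theta$ together with a stochastic-equicontinuity estimate for $M_T(\theta)$), so that the standard argmax theorem gives $\widehat\theta_T\to\theta_0$ in probability; since $\widehat m_T$ solves $g(m)=0$ in \eqref{eq:m} and $\widehat\alpha_T$ is the profiled value \eqref{MLE-alpha}, $\widehat\theta_T$ is precisely the joint maximizer, which proves the claim.

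The main obstacles I anticipate are twofold: (i) the boundary classification establishing ergodicity of this \emph{nonlinear} model with multiplicative noise, together with the $\mu$-moment control needed to guarantee integrability of $\phi_\theta$ and $\phi_\theta^2$; and (ii) upgrading the almost-sure pointwise limit of $\tfrac1T l_T$ to the uniform-in-$\theta$ convergence demanded by the argmax argument. Everything else — the algebraic reduction, the martingale law of large numbers, and the identifiability computation — is routine once ergodicity is in hand.
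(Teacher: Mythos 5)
Your proposal is a genuinely different proof from the one in the paper, and it is worth spelling out the contrast. The paper never invokes ergodic theory or an argmax theorem: it substitutes the true dynamics \eqref{SGLDE} directly into the two score equations defining \eqref{MLE-alpha} and \eqref{eq:m}, rewrites them as a $2\times 2$ linear system in the unknowns $\hat{\alpha}-\alpha_0$ and the discrepancy functional $\mathcal{J}_1+\mathcal{J}_2=\int_0^T(1-X_t^{\hat m})(X_t^{\hat m}-X_t^{m_0})\,dt + \int_0^T X_t^{\hat m}(X_t^{m_0}-X_t^{\hat m})\ln(X_t)\,dt$, inverts the matrix explicitly, and then drives each resulting term to zero by sign and boundedness properties of the time integrals $J_1,J_2,J_3$ together with expectation arguments; as a by-product it reads off that $\E(\mathcal{J}_1)=\E(\mathcal{J}_2)=0$ for all $T$, i.e.\ asymptotic unbiasedness. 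Your route is the classical M-estimation scheme for ergodic diffusions. Your algebraic reduction is correct: with $\phi_\theta(x)=\alpha(1-x^m)$, the exponent of \eqref{likelihood} does collapse to $-\tfrac{1}{2\sigma^2}\int_0^T(\phi_\theta-\phi_{\theta_0})^2\,dt+\tfrac{1}{\sigma}\int_0^T(\phi_\theta-\phi_{\theta_0})\,dB$, and your identifiability computation is sound ($\alpha x^m-\alpha_0 x^{m_0}$ constant on a nondegenerate interval forces $m=m_0$, $\alpha=\alpha_0$ when $\alpha_0\neq 0$). Your approach buys modularity, a clean limiting contrast, and the natural setup for the asymptotic normality the paper leaves open; the paper's approach buys directness on the specific estimators without any appeal to stationarity.

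There are, however, two concrete gaps you would have to close. First, ergodicity is \emph{not} unconditional for this model, so your Feller-test step, as stated, verifies something false in general: near $x=0$ the scale density behaves like $x^{-2\alpha_0/\sigma^2}$ and the speed density like $x^{2\alpha_0/\sigma^2-2}$, so the boundary $0$ is non-attracting and the speed measure finite precisely when $2\alpha_0>\sigma^2$. In the regime $2\alpha_0<\sigma^2$ the process goes extinct ($X_t\to 0$ a.s.), $X_t^m\to 0$ for every $m>0$, the accumulated information about $m$ stays bounded, and consistency of $\hat m$ genuinely fails — so your proof requires the standing hypothesis $2\alpha_0>\sigma^2$, which the theorem does not state. (To be fair, the paper's own argument silently relies on comparable unstated path properties, e.g.\ $X_t\in(0,1)$ to fix the signs of $J_1$ and $J_3$.) Second, the estimators of \eqref{MLE-alpha} and \eqref{eq:m} are \emph{roots of the score equations}, obtained by profiling $\alpha$ and solving $g(m)=0$ numerically; they are not certified global maximizers of $l_T$, and the limiting contrast $\mathcal{L}(\theta)=-\tfrac{1}{2\sigma^2}\int(\phi_\theta-\phi_{\theta_0})^2\,d\mu$ is not concave in $(\alpha,m)$, so multiple roots cannot be excluded a priori. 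The argmax theorem you invoke applies to (near-)maximizers over a set in which the estimator is eventually confined, so you need either a coercivity argument preventing $(\hat\alpha_T,\hat m_T)$ from escaping compacta together with asymptotic uniqueness of the score root, or a local consistency argument around $\theta_0$ in the style of Ibragimov--Has'minskii. These two items, plus the uniform-in-$\theta$ convergence you already flag, are where the real work lies; the rest of your outline is routine and correct.
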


The proof is deferred to Appendix \ref{Appe_1}.

\begin{remark}
    From the proof of the last theorem we also deduce  that both MLEs are asymptotically unbiased estimators.
\end{remark}

\subsection{Discrete (incomplete) case}\label{sec:dis_case}
 In this section, we consider the scenario when we have a {\it discrete observation} of the continuous process $\bm X$, i.e., we have only records at times $0=t_0<\ldots<t_k$ when $\Delta_i=t_i-t_{i-1}$, for $i=1,\ldots,k$ and $k$ is small, we denote the data by $\bm{X}^{obs}=\{X(t_0),\ldots, X(t_k)\}$. 

It is well known that when we do not have enough data, we cannot achieve accurate estimations by directly using the expressions presented in Section \ref{sec:con_cas} (cf. \cite{cr-he-li-sc} and the references therein). Our strategy to tackle this problem, and to obtain good estimators, is to see the gap between observations as incomplete information in such a manner that we complete the information and we could be in a position to use the results from Section \ref{sec:con_cas}. Thus, to achieve that we simulate paths between two observations (diffusion bridges) to consider that we have continuous observation.  To this end, we use the method that applies to ergodic diffusion processes in \cite{bl-so-14}.

 The data set $\bm{X}^{obs}$ can be conceptualized as an incomplete observation of the complete data set $\bm X$. Consequently, the EM algorithm can be utilized for likelihood-based estimation (see \cite{mc-kr-97}).
 
\subsubsection{EM algorithm}\label{sec:EM-meth}
Assume that $\bm{X}^{obs}$ is contained in the continuous observation set $\bm X$. 
Moreover, the first and last elements of $\bm X$, namely, $X(t_0)$ and $X(t_k)$, are assumed to be part of $\bm{X}^{obs}$.

The EM algorithm commences by utilizing proposed values for the initial parameters, denoted as $\theta^0$. The selection of this particular set may be considered arbitrary; however, it bears little significance as the EM algorithm ultimately achieves convergence. Then, the first proposal is computed using the expressions \eqref{QV-Sigma}, \eqref{MLE-alpha}, and \eqref{eq:m} with the incomplete observation denoted as $\bm{X}^{obs}$.

The likelihood ratio \eqref{likelihood} is a conditional exponential family (see \cite{kuch2006}). As a result, the E step in the EM algorithm could be simplified by calculating the conditional expectation of the sufficient statistics $\bm X_i=\{X(t)\}_{t=t_{i-1}}^{t_i}$, for $i=1,\ldots,k$. Then, our version of the EM algorithm is as follows.


\begin{algorithm}[H]
\caption{EM Algorithm}
\begin{algorithmic}
  \State (1) Set $n = 0$ 
  \State (2) Initialize $\theta^0$
    \State 
    (3) \textbf{E-step:}
    \State \hspace{\algorithmicindent} For $i \in\{ 1,\dots,k\}$ compute \[Y_i^n:=E_{\theta^n}[\bm{X}_i|X(t_{i-1}),X(t_{i})]\]

    \State 
    (4) \textbf{M-step:}
    \State \hspace{\algorithmicindent} 
    Using \eqref{MLE-alpha}  and \eqref{eq:m} compute the MLE\, $\widehat{\alpha}^{n+1}_{ML}$ and $\widehat{m}^{n+1}_{ML}$, with the n-th proposed trajectory \[\bm{X}_n^{prop}:=\{Y_1^n,\dots,Y_k^n\}\]
    \State 
    (5) \textbf{Update $\sigma$:}
    \State \hspace{\algorithmicindent} Compute $\widehat{\sigma}^{n+1}_{QV}$ with \eqref{QV-Sigma}
    \State (6) Set $n = n + 1$ and go to 3.

\end{algorithmic}
\label{al:em}
\end{algorithm}

The steps 3-6 of Algorithm \ref{al:em} are iterated until convergence.
In step 3, we compute $Y_i^n$ by applying the Monte Carlo method to the diffusion bridges generated between the elements of $\bm{X}^{obs}$ using the proposed parameters. Specifically, we simulate $N$ bridges given the first element $X(t_i)$ and the last $X(t_{i-1})$ for $i = 1,\dots,k$, and then we calculate the mean of each point of the $N$ bridges.

In each iteration, the parameters of the proposal will become progressively closer to the actual parameters, resulting in $\bm X^{prop}$ becoming more accurate to $\bm X$ and subsequently enhancing the proposal for the next set of parameters.

\section{Simulation study}\label{simul_study}
The numerical results of a simulation study are presented in this section. We evaluate both scenarios discussed in Section \ref{sec-EstPAR}, namely the complete and the incomplete data, and conduct a numerical experiment to test the methods.

\subsection{Experiment 1: Continuous (complete) data}
\label{sub_sec:experimtn01}
Here, we illustrate numerically the estimation and the consistency properties of 
the estimators under the unrealistic assumption of complete data within an interval
time $[0, T]$.

First, we consider a path of the solution process of \eqref{SGLDE} with parameters 
${\alpha = 1.0}$, ${m=2.0}$, and ${\sigma = \num{0.05}}$ in 
the interval time $[0,10]$ with $n=5,000$. \Cref{fig:consistency} 
illustrates the numerical convergence of the estimators to the parameters as the value of $T$ increases within the observed interval. 
This exhibits the consistency of the estimators proved in Theorem \ref{theo:cons} 
for $\hat{\alpha}_{MLE} $, and $\hat{m}_{MLE}$. It should be noted that the estimators for the parameters $m$ and $\sigma$ exhibit 
initial poor performance at the start of the observation, with improvement occurring as the observation period progresses.

\begin{figure}[H]
  \centering
    \includegraphics[width=1\textwidth]{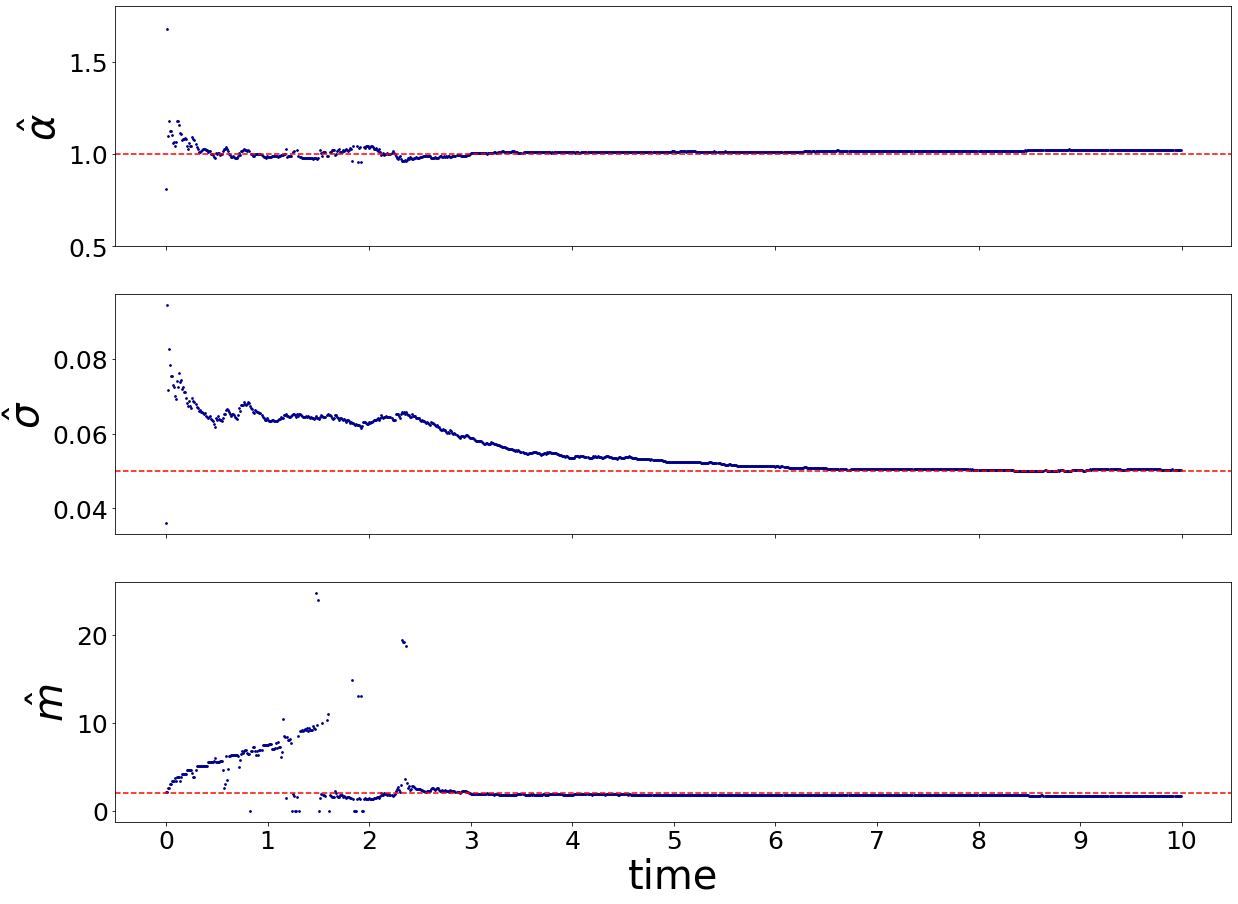}
    \caption{%
        Illustration of the consistency of MLEs $(\hat{\alpha},\hat{m})$ and the QV estimator $\hat{\sigma}$.%
    }
    \label{fig:consistency}
\end{figure}
\Cref{fig:log-error}  displays  the $\log$-error  of our estimation with complete data. That is, we 
present the natural logarithm of the absolute value of the difference between
the real value of the parameter and the estimator in the interval time $[0,10]$
for the estimators obtained from \eqref{MLE-alpha}, \eqref{MLE-m}, and \eqref{QV-Sigma}.
After conducting the experiment, we observed that the errors, when determining the values of parameters $\alpha$ and $m$, are directly proportional to the step size of numerical integration and the non-linearity of these parameters. The analysis has revealed that even with damped amplitude, the values of $\alpha$ and $m$ remain consistent for $t>\num{7.0}$. However, it is important to note that these errors still exceed the $\sigma$ log error due to the higher-order nature of the quadratic variation method used for estimating this parameter.

\begin{figure}[H]
  \centering
 \includegraphics[width=0.5
 \textwidth]{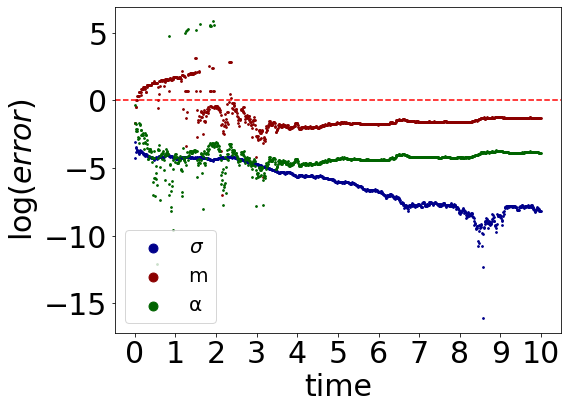}
  \caption{Log error between estimators and parameters}
  \label{fig:log-error}
\end{figure}

In Table \ref{tab:cont}, we report the average, the 95$\%$ 
quantiles, and the mean squared errors  of the estimators obtained from a series of data obtained after simulating 1000 paths in the time interval $[0,10]$ with ten thousand points each and the length of each step size $\Delta_i$ ($i=1 ,\ldots,10 000$) as $0.001$. We calculate the estimators for each of them using 
\eqref{MLE-alpha}, \eqref{eq:m}, and \eqref{QV-Sigma} for four different values of all parameters. We can observe that we obtained good estimators for all cases. 

\begin{table}[H]
    \flushleft
    \textbf{Notation:}
    \\
        \hspace{1cm} \textbf{PE} puntual estimator
    \\
        \hspace{1cm} \textbf{95Q} 95\% quantiles
    \\
        \hspace{1cm} \textbf{MSE} mean squarred error
    
    \centering
    \begin{tabular}{
        @{}%
            w{l}{0.05\textwidth}%
            w{r}{0.15\textwidth}%
            w{c}{0.15\textwidth}%
            w{c}{0.30\textwidth}%
            w{c}{0.15\textwidth}%
        @{}%
    }
        \toprule
        \\
            & \multicolumn{4}{l}{\textbf{Case 1:} $\alpha=0.7$, $\sigma=0.01$, $m=0.6$}
        \\ 
            \cmidrule(lr){2-5} 
            &
            & 
                \multicolumn{1}{c}{
                    \textbf{PE}
                }
            & 
                \multicolumn{1}{c}{
                    \textbf{95Q}
                }
            & 
                \multicolumn{1}{c}{
                    \textbf{MSE} 
                }
        \\
         \cmidrule(lr){3-3} 
         \cmidrule(lr){4-4}
         \cmidrule(lr){5-5} 
            & $\widehat{\alpha}$       
            & 
                \num{0.700 888}
            &
                (\num{0.657 515}, \num{0.752 729})
            &
                \num{6.1e-4}
        \\
            & 
                $\widehat{\sigma}$
            & 
                \num{0.011 282}
            &
                (\num{0.011 099}, \num{0.011 500})
            &
                \num{8.5e-9}    
        \\
            & 
                $\widehat{m}$
            & 
                \num{0.600 990}
            &
                (\num{0.531 013}, \num{0.671 848})
            &
                \num{1.3e-3}
        \\
        \\
        & \multicolumn{4}{l}{\textbf{Case 2:} $\alpha=0.9$, $\sigma=0.01$, $m=1$}
        \\ 
            \cmidrule(lr){2-5} 
            &
            & 
                \multicolumn{1}{c}{
                    \textbf{PE}
                }
            & 
                \multicolumn{1}{c}{
                    \textbf{95Q}
                }
            & 
                \multicolumn{1}{c}{
                    \textbf{MSE} 
                }
        \\
         \cmidrule(lr){3-3} 
         \cmidrule(lr){4-4}
         \cmidrule(lr){5-5} 
            & $\widehat{\alpha}$       
            & 
                \num{0.901 989}
            &
                (\num{0.874 677}, \num{0.933 052})
            &
                \num{2.1e-4}
        \\
            & 
                $\widehat{\sigma}$
            & 
                \num{0.011 247}
            &
                (\num{0.011 100}, \num{0.011 400})
            &
                \num{5.8e-9}  
        \\
            & 
                $\widehat{m}$
            & 
                \num{0.997 121}
            &
                (\num{0.925 817}, \num{1.070 707})
            &
                \num{1.3e-3}
        \\
        \\
        & \multicolumn{4}{l}{\textbf{Case 3:} $\alpha=1$, $\sigma=0.05$, $m=2$}
        \\ 
            \cmidrule(lr){2-5} 
            &
            & 
                \multicolumn{1}{c}{
                    \textbf{PE}
                }
            & 
                \multicolumn{1}{c}{
                    \textbf{95Q}
                }
            & 
                \multicolumn{1}{c}{
                    \textbf{MSE} 
                }
        \\
         \cmidrule(lr){3-3} 
         \cmidrule(lr){4-4}
         \cmidrule(lr){5-5} 
            & $\widehat{\alpha}$       
            & 
                \num{1.003 503}
            &
                (\num{0.920 568}, \num{1.090 649})
            &
                \num{1.9e-3}
        \\
            & 
                $\widehat{\sigma}$
            & 
                \num{0.050 373}
            &
                (\num{0.049 590}, \num{0.051 130})
            &
                \num{1.6e-7}        
        \\
            & 
                $\widehat{m}$
            & 
                \num{2.048 797}
            &
                (\num{1.518 122}, \num{2.748 484})
            &
                \num{1.0e-1}
        \\
        \\
        & \multicolumn{4}{l}{\textbf{Case 4:} $\alpha=2.5$, $\sigma=0.01$, $m=15$}
        \\ 
            \cmidrule(lr){2-5} 
            &
            & 
                \multicolumn{1}{c}{
                    \textbf{PE}
                }
            & 
                \multicolumn{1}{c}{
                    \textbf{95Q}
                }
            & 
                \multicolumn{1}{c}{
                    \textbf{MSE} 
                }
        \\
         \cmidrule(lr){3-3} 
         \cmidrule(lr){4-4}
         \cmidrule(lr){5-5} 
            & $\widehat{\alpha}$       
            & 
                \num{2.501 108 }
            &
                (\num{2.482 557}, \num{2.519 105})
            &
                \num{8.5e-5}
        \\
            & 
                $\widehat{\sigma}$
            & 
                \num{0.014 972}
            &
                (\num{0.014 865}, \num{0.015 083})
            &
                \num{3.1e-9}        
        \\
            & 
                $\widehat{m}$
            & 
                \num{15.024 19}
            &
                (\num{14.031 43}, \num{16.221 40})
            &
                \num{3.0e-1}
        \\
        \bottomrule
    \end{tabular}
\caption{ Average estimators, 95$\%$ quantiles and Mean squared error of the estimated parameter  obtained from 1
000 simulated continuous paths.}\label{tab:cont}
\end{table}


\subsection{Experiment 2: Discrete (incomplete) data}
In this section, we will conduct a numerical study on the case of 
{\it incomplete observation} of the solution. This refers to situations where
the number of recorded data is small, and the estimators discussed in Section
\ref{sec:con_cas} do not yield accurate results. Then,
as previously mentioned, we are utilizing the version of the EM algorithm proposed 
in Section \ref{sec:dis_case}.

First, we simulated $500$ paths in the time interval $[0,10]$. For each simulated path, 
we set the initial value $ X (0) = \num{0.05}$ and the parameter values 
$\alpha_0=\num{0.4}$, $ m_0=2$, and $ \sigma_0=\num{0.01}$. Each path has
been considered a data set with a $n=\num{10000}$ length. 
Out of each data set, we extracted incomplete data by retaining only $10\%$ of them,
which is equivalent to $k=\num{1000}$.

We run Algorithm \ref{al:em} for each incomplete sample using 10 iterations. In the E-step, we reconstruct the complete dataset using diffusion bridges based on a set of proposed parameters. This process allows us to generate a dataset with the same number of points as the original dataset. To achieve this, we generate 100 bridges between the points of the incomplete dataset. The initial proposed parameters are obtained by straightforwardly applying the method from Section \ref{sec:con_cas} to the incomplete dataset. Although these parameters may not be accurate, they can provide a first approximation of the actual values.

The average of estimators for each parameter was plotted in Figure \ref{fig:em} for every iteration of Algorithm \ref{al:em}.  Figure \ref{fig:em} illustrates the algorithm's simultaneous convergence for the three parameters. It is important to recall that the convergence of each parameter depends on the others. This is the reason why the initial iteration, marked by incomplete data, produces nonoptimal results. However, with subsequent iterations, the results gradually improve until they all converge to the true value at approximately the same iteration.
\begin{figure}[H]
  \centering
\includegraphics[width=0.9\textwidth]{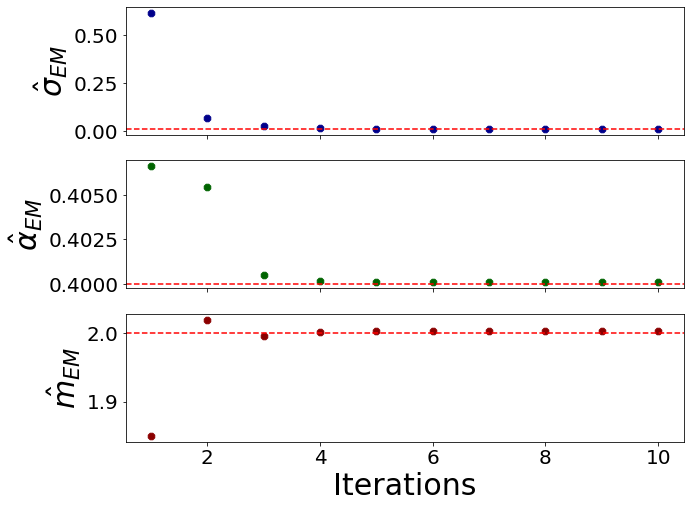}
    \caption{Average estimations for each iteration with $\alpha=0.4$, $m=2$, and $\sigma=0.01$.}
    \label{fig:em}

\end{figure}

On the other hand, we implemented Algorithm \ref{al:em} with three different sets of parameters and two different numbers of observations for each scenario. 
In Table \ref{tab:EM}, we present the averages and $95\%$ quantiles for a sample obtained from the $10$th iteration of Algorithm \ref{al:em} across 1,000 datasets. In particular, Table \ref{tab:EM} showcases the impressive resilience of the algorithm, revealing that the estimators derived from using merely 1\% of the complete data are remarkably akin to the estimators obtained via a sparse 10\% sample. Furthermore, it is evident that there are no biases in either scenario, and the results of experiments for the incomplete case with limited data (1\%) are highly satisfactory.
\begin{table}[H]
    \flushleft
    \textbf{Notation:}
    \\
        \hspace{1cm} \textbf{PIF} partial information fraction 
    \\
        \hspace{1cm} \textbf{EMPE} EM-puntual estimator
    \\
        \hspace{1cm} \textbf{MSE} mean squarred error
    
    \centering
    \begin{tabular}{
        @{}%
            w{l}{0.05\textwidth}%
            w{r}{0.15\textwidth}%
            w{r}{0.15\textwidth}%
            w{r}{0.15\textwidth}%
            w{r}{0.15\textwidth}%
            w{r}{0.15\textwidth}%
        @{}%
    }
        \toprule
        \\
            & \multicolumn{5}{l}{\textbf{Case 1:} $\alpha=0.7$, $\sigma=0.01$, $m=0.6$}
        \\ 
            \cmidrule(lr){2-6} 
            &                          
            & 
                \multicolumn{2}{c}{\textbf{PIF: 10\%}} 
            & 
                \multicolumn{2}{c}{\textbf{PIF: 1\%}} 
        \\ 
            \cmidrule(lr){3-4}
            \cmidrule(lr){5-6}
            &
            & 
                \multicolumn{1}{r}{
                    \textbf{EMPE}
                }
            & 
                \multicolumn{1}{r}{
                    \textbf{MSE}
                }
            & 
                \multicolumn{1}{r}{
                    \textbf{EMPE}
                }
            & 
                \multicolumn{1}{r}{
                    \textbf{MSE} 
                }
        \\
            & $\widehat{\alpha}$       
            & 
                \num{0.700923}
            &
                \num{6.1e-4}
            &
                \num{0.701065}
            &
                \num{6.1e-4}
        \\
            & 
                $\widehat{\sigma}$
            &
                \num{0.010032}
            &   
                \num{7.9e-8}
            &   
                \num{0.009341}
            &   \num{6.9e-7}            
        \\
            & 
                $\widehat{m}$
            &
                \num{0.600778}
            &
                \num{1.3e-3}
            &
                \num{0.600505}
            &
                \num{1.3e-3}
        \\
        \\
        &
            \multicolumn{5}{l}{ 
                \textbf{Case 2:}
                $\alpha=\num{0.9}$,
                $\sigma=\num{0.01}$,
                $m=\num{1}$
            }
        \\ 
            \cmidrule(lr){2-6} 
            &                          
            & 
                \multicolumn{2}{c}{\textbf{PIF: 10\%}} 
            & 
                \multicolumn{2}{c}{\textbf{PIF: 1\%}} 
        \\ 
            \cmidrule(lr){3-4}
            \cmidrule(lr){5-6}
            &
            & 
                \multicolumn{1}{r}{
                    \textbf{EMPE}
                }
            & 
                \multicolumn{1}{r}{
                    \textbf{MSE}
                }
            & 
                \multicolumn{1}{r}{
                    \textbf{EMPE}
                }
            & 
                \multicolumn{1}{r}{
                    \textbf{MSE} 
                }
     \\
        &
            $\widehat{\alpha}$
        &
            \num{0.902120}
        &
            \num{2.1e-4}
        &
            \num{0.902301}
        &
            \num{2.2e-4}
    \\
        &
            $\widehat{\sigma}$
        &
            \num{0.010424}
        &
            \num{5.9e-8}
        &
            \num{0.009679}
        &
            \num{5.1e-7}
    \\
        &
            $\widehat{m}$
        &
            \num{0.996465}
        &
            \num{1.3e-3}
        &
            \num{0.995809}
        &
            \num{1.3e-3}
        \\
        \\
        &
            \multicolumn{5}{l}{ 
                \textbf{Case 3:}
                $\alpha=\num{1}$,
                $\sigma=\num{0.05}$,
                $m=\num{2}$
            }
        \\ 
            \cmidrule(lr){2-6} 
            &                          
            & 
                \multicolumn{2}{c}{\textbf{PIF: 10\%}} 
            & 
                \multicolumn{2}{c}{\textbf{PIF: 1\%}} 
        \\ 
            \cmidrule(lr){3-4}
            \cmidrule(lr){5-6}
            &
            & 
                \multicolumn{1}{r}{
                    \textbf{EMPE}
                }
            & 
                \multicolumn{1}{r}{
                    \textbf{MSE}
                }
            & 
                \multicolumn{1}{r}{
                    \textbf{EMPE}
                }
            & 
                \multicolumn{1}{r}{
                    \textbf{MSE} 
                }
     \\
        &
            $\widehat{\alpha}$
        &
            \num{1.006274}
        &
            \num{1.9e-3}
        &
            \num{1.008769}
        &
            \num{1.9e-3}
    \\
        &
            $\widehat{\sigma}$
        &
            \num{0.047267}
        &
            \num{1.4e-6}
        &
            \num{0.041388}
        &
            \num{1.3e-5}
    \\
        &
            $\widehat{m}$
        &
            \num{2.015839}
        &
            \num{9.1e-2}
        &
            \num{1.976043}
        &
            \num{8.5e-2}
        \\
        \\
        &
            \multicolumn{5}{l}{ 
                \textbf{Case 4:}
                $\alpha=\num{2.5}$,
                $\sigma=\num{0.01}$,
                $m=\num{15}$
            }
        \\ 
            \cmidrule(lr){2-6} 
            &                          
            & 
                \multicolumn{2}{c}{\textbf{PIF: 10\%}} 
            & 
                \multicolumn{2}{c}{\textbf{PIF: 1\%}} 
        \\ 
            \cmidrule(lr){3-4}
            \cmidrule(lr){5-6}
            &
            & 
                \multicolumn{1}{r}{
                    \textbf{EMPE}
                }
            & 
                \multicolumn{1}{r}{
                    \textbf{MSE}
                }
            & 
                \multicolumn{1}{r}{
                    \textbf{EMPE}
                }
            & 
                \multicolumn{1}{r}{
                    \textbf{MSE} 
                }
     \\
        &
            $\widehat{\alpha}$
        &
            \num{2.503135}
        &
            \num{8.6e-5}
        &
            \num{2.502366}
        &
            \num{8.8e-5}
    \\
        &
            $\widehat{\sigma}$
        &
            \num{0.013377}
        &
            \num{1.3e-5}
        &
            \num{0.011455}
        &
            \num{6.5e-5}
    \\
        &
            $\widehat{m}$
        &
            \num{14.08557}
        &
            \num{2.2e-4}
        &
            \num{13.69698}
        &
            \num{3.7e-4}
        \\
        \bottomrule
    \end{tabular}
    \caption{EMPE estimation}
    \label{tab:EM}
\end{table}

We are interested in perform a comparison between the accuracy of the estimators for the scenarios considered in this work, namely with complete observation at $[0,10]$ and incomplete data with 10\% and 1\% of the complete information. Thus, we performed simulations with two different combinations of parameters in particular, one for $m<1$ and other for $m>1$. In Table \ref{tab:comp}, we report the estimators of the parameters, which is obtained using the average of 1000 databases and their corresponding mean squared error for the three data scenarios. The results for the  incomplete cases were obtained by running Algorithm \ref{al:em} up to 10 iterations. The numerical results demonstrate the consistency of the estimators based on the available information in each scenario. In other words, there is improved precision with greater information. Nevertheless, it is important to highlight that even in the extreme case of the third scenario, the estimation method presented in this paper performs exceptionally well.
\begin{table}[H]
    \flushleft
    \textbf{Notation:}
    \\
       \hspace{1cm} \textbf{CI} complete information
    \\
        \hspace{1cm} \textbf{PIF} partial information fraction 
        \\
        \hspace{1cm} \textbf{PE} puntual estimator
    \\
        \hspace{1cm} \textbf{EMPE} EM-puntual estimator
    \\
        \hspace{1cm} \textbf{MSE} mean squarred error
    
    \centering
    \begin{tabular}{
        @{}%
            w{r}{0.1\textwidth}%
            w{r}{0.13\textwidth}%
            w{r}{0.13\textwidth}%
            w{r}{0.13\textwidth}%
            w{r}{0.13\textwidth}%
            w{r}{0.13\textwidth}%
            w{r}{0.13\textwidth}%
        @{}%
    }
        \toprule
        \\
            & \multicolumn{5}{l}{\textbf{Case 1:} $\alpha=0.7$, $\sigma=0.01$, $m=0.6$}
        \\ 
            \cmidrule(lr){1-7} 
            &  \multicolumn{2}{c}{\textbf{CI}}&  \multicolumn{2}{c}{\textbf{PIF: 10\%}} &  \multicolumn{2}{c}{\textbf{PIF: 1\%}} 
        \\ 
            \cmidrule(lr){2-3}
            \cmidrule(lr){4-5}
            \cmidrule(lr){6-7}
            & 
                \multicolumn{1}{c}{
                    \quad\textbf{PE}
                }
            & 
                \multicolumn{1}{c}{
                    \textbf{MSE}
                }
            & 
                \multicolumn{1}{r}{
                    \textbf{EMPE}
                }
            & 
                \multicolumn{1}{c}{
                    \textbf{MSE} 
                }
                & 
                \multicolumn{1}{r}{
                    \textbf{EMPE}
                }
            & 
                \multicolumn{1}{c}{
                    \textbf{MSE} 
                }
        \\
            $\widehat{\alpha}$     
            & 
                \num{0.700 888}
            &
                \num{6.1e-4}
            & 
                \num{0.700 923}
            &
                \num{6.1e-4}
            &
                \num{0.701 065}
            &
                \num{6.1e-4}
        \\
                $\widehat{\sigma}$
            & 
                \num{0.011 282}
            &
                \num{8.5e-9}
            & 
                \num{0.010 032 }
            &
                \num{7.9e-8}
            &
                \num{0.009 341}
            &
                \num{6.9e-7}        
        \\
                $\widehat{m}$
            & 
                \num{0.600 990}
            &
                \num{1.3e-3}
            & 
                \num{0.600 778}
            &
                \num{1.3e-3}
            &
                \num{0.600 505}
            &
                \num{1.3e-3}
        \\
        \\
        & \multicolumn{5}{l}{\textbf{Case 3:} $\alpha=1$, $\sigma=0.05$, $m=2$}
        \\ 
            \cmidrule(lr){1-7} 
            &  \multicolumn{2}{c}{\textbf{CI}}&  \multicolumn{2}{c}{\textbf{PIF: 10\%}} &  \multicolumn{2}{c}{\textbf{PIF: 1\%}} 
        \\ 
            \cmidrule(lr){2-3}
            \cmidrule(lr){4-5}
            \cmidrule(lr){6-7}
            & 
                \multicolumn{1}{c}{
                    \quad\textbf{PE}
                }
            & 
                \multicolumn{1}{c}{
                    \textbf{MSE}
                }
            & 
                \multicolumn{1}{r}{
                    \textbf{EMPE}
                }
            & 
                \multicolumn{1}{c}{
                    \textbf{MSE} 
                }
                & 
                \multicolumn{1}{r}{
                    \textbf{EMPE}
                }
            & 
                \multicolumn{1}{c}{
                    \textbf{MSE} 
                }
        \\
            $\widehat{\alpha}$       
            & 
                \num{1.003 503}
            &
                \num{1.9e-3}
            & 
                \num{1.006 274}
            &
                \num{1.9e-3}
            &
                \num{1.008 769}
            &
                \num{1.9e-3}
        \\
                $\widehat{\sigma}$
            & 
                \num{0.050 373}
            &
                \num{1.6e-7}
            & 
                \num{0.047 267}
            &
                \num{1.4e-6}
            &
                \num{0.041 388}
            &
                \num{1.3e-5}           
        \\
                $\widehat{m}$
           & 
                \num{2.048 797}
            &
                \num{1.0e-1}
            & 
                \num{2.015 839}
            &
                \num{9.1e-2}
            &
                \num{1.976 043}
            &
                \num{8.5e-2}        
        \\
        \bottomrule
    \end{tabular}
    \caption{Estimation comparison}
    \label{tab:comp}
\end{table}

\section{Discussion-Conclusion} \label{sec-Conclusions}

\paragraph{Statement of principal findings}
The first aim of this paper was to develop consistent estimators for the
parameters $\sigma$, $\alpha$, and $m$. Our second goal was to address the issue of 
incomplete information. Building upon these objectives, we present the following arguments.

\paragraph{Strengths and weaknesses of the study}
The estimators for $\alpha$ and $m$ presented in this study are proven to be consistent according to 
Theorem \ref{theo:cons}. This means that as the simulation time increases, more information becomes 
available, and the accuracy of our estimates improves significantly.
Nevertheless, it is important to note that satisfactory results can be achieved even with a modest amount 
of information. Indeed, Figure \ref{fig:em} illustrates the convergence of our estimates to the true value 
after only 10 iterations, using a small initial data set (1\%) obtained from Experiment 1 (see subsection 
\ref{sub_sec:experimtn01}). The success achieved with incomplete information can be attributed to using the 
EM algorithm, which enables us to obtain accurate estimates for the three parameters involved in the model 
from this work.
 During the numerical experiments, we found that if we assume that $m\in (0,\tfrac{1}{2}]$,
 then the method we use to find the MLE for $m$ fails. This could be due to the fact that the 
 involved function is $\gamma$-H\"older continuous with $\gamma\le m<\tfrac{1}{2}$, and therefore, 
 the numerical method we used becomes unstable. Thus, it could be necessary to implement
 a different method to estimate parameter $m$ for these cases.

\paragraph{Related Work}
    To provide some context related to our findings, we mention the contributions of
    J.C. Cortés et al. \cite{Cortes2019} and V. Bevia et al. \cite{Bevia2023}. 
    In \cite{Bevia2023}, the authors conducted a thorough analysis by utilizing 
    the same deterministic base but as a random differential equation. In \cite{Cortes2019}, 
    the authors provide a comprehensive analysis via Karhunen-Loeve expansion. To the best of our
    knowledge, these two contributions are among the few that examine and calibrate the 
    randomized version of Richardson's Logistic Model. 

\paragraph{Meaning of the study}    
        In the literature mentioned, we have made a contribution by analyzing, calibrating,
    and treating partial information of a stochastic version of the generalized logistic DE.
    As far as we know, our work is the first  attempt to estimate parameters from a  
    stochastic version of a generalized Logistic model as in \Cref{SGLDE}, using a version of 
    EM algorithm while also addressing partial information. 
    Our approach involves a rigorous analysis of the stochastic version of the generalized 
    logistic DE and its partial information. We employed the EM algorithm and showed
    its performance in estimating the model parameters
    with partial information. In particular, Algorithm \ref{al:em} works well, even if the distance between each data is different.

\paragraph{Unanswered questions and future research}   
   Our methodology relies on formal stochastic analysis to deduce consistent estimators, so our simulations 
   suggest robust and efficient implementations to work with real data. However, we believe that the amount 
   of information that implies working with real data and other issues leads to our next publication. The 
   asymptotic normality of the MLE remains to be studied. This work opens the door to applying the SGLDE to 
   actual data. Indeed, we will strive to utilize the models and methodologies outlined in this study to 
   analyze real-world data, for instance, capture data in the fishery  (cf. \cite{del-bal}).

\section*{Acknowledgements} The research was supported by PAPIIT-IN102224.

\appendix

\section{Proof of consistency}\label{Appe_1}

This section is devoted to the proof of Theorem \ref{theo:cons}.

First let us to review some facts that will be useful for the proof. We denote $B_t=B(t)$ and $X_t=X(t)$, and
we define the following expressions:

\begin{align*}
   I_1&:= I_1(T):= \int_0^T X_t^{\hat{m}} \, \ln(X_t)\,  dB_t,\\
    I_2&:= I_2(T):= \int_0^T (1-X_t^{\hat{m}})\,  dB_t,\\
    J_1&:= J_1(T):= \int_0^T X_t^{\hat{m}} \, \ln(X_t)\,  dt,\\
    J_2&:= J_2(T):= \int_0^T (1-X_t^{\hat{m}})^2\,   dt,\\
    J_3&:= J_3(T):= \int_0^T X_t^{2\hat{m}} \, \ln(X_t)\,  dt,\\ 
   \mathcal{J}_1&:= \mathcal{J}_1(T):= \int_0^T X_t^{\hat{m}} (X_t^{m_0}-X_t^{\hat{m}} ) \, \ln(X_t)\,  dt,\\
\mathcal{J}_2&:=\mathcal{J}_2(T):= \int_0^T (1-X_t^{\hat{m}}) (X_t^{\hat{m}}-X_t^{m_0}) \,  dt.
\end{align*}
Note that, by the properties of the stochastic solution $X_t$, we have that $J_1<0$, $J_2>0$ and $J_3<0$, with probability $1$.

\begin{proof}[Proof of Theorem \ref{theo:cons}]

We use the SDE \eqref{SGLDE} into the equation \eqref{MLE-alpha} and \eqref{eq:m}, and after some algebra we obtain two equations as follows

\begin{align*}
    J_2\big( \hat{\alpha} - \alpha_0\big) - 
 \mathcal{J}_2\alpha_0 & = \sigma I_2,\\
 (J_1+ J_3) \big( \hat{\alpha} - \alpha_0\big) - \mathcal{J}_1\alpha_0 & = \sigma I_1.
\end{align*}
    
From these equations we can get
\begin{align*}
    J_2\big( \hat{\alpha} - \alpha_0\big) - \alpha_0 
 (\mathcal{J}_1+ \mathcal{J}_2)  & = \sigma I_2 - \alpha_0 \mathcal{J}_1 ,\\
 (J_1+ J_3) \big( \hat{\alpha} - \alpha_0\big) - \alpha_0 
 (\mathcal{J}_1+ \mathcal{J}_2)& = \sigma I_1 - \alpha_0 \mathcal{J}_2.
\end{align*}

Then, we write the last two equations as a matrix system

\begin{align}
    \begin{pmatrix}
         (J_1+ J_3) & -\alpha_0 \\
        J_2 & -\alpha_0
    \end{pmatrix} \begin{pmatrix}
        \hat{\alpha} - \alpha_0\\
        \mathcal{J}_1+ \mathcal{J}_2
    \end{pmatrix} = 
    \begin{pmatrix}
        \sigma I_1 - \alpha_0 \mathcal{J}_2\\
        \sigma I_2 - \alpha_0 \mathcal{J}_1 
    \end{pmatrix},
\end{align}

Which can be solved explicitly:

\begin{align}\label{eq:alpha+hidden_m}
    \begin{pmatrix}
        \hat{\alpha} - \alpha_0\\
        \mathcal{J}_1+ \mathcal{J}_2
    \end{pmatrix} &= \frac{1}{\alpha_0\big(-J_1+J_2- J_3\big)} \begin{pmatrix}
         -\alpha_0 & \alpha_0 \\
        -J_2 & (J_1+ J_3)
    \end{pmatrix} 
    \begin{pmatrix}
        \sigma I_1 - \alpha_0 \mathcal{J}_2\\
        \sigma I_2 - \alpha_0 \mathcal{J}_1 
    \end{pmatrix}\nonumber
    \\
    &= \frac{1}{\alpha_0\big(-J_1+J_2- J_3\big)} \begin{pmatrix}
        \alpha_0 \big(-\sigma I_1 + \alpha_0 \mathcal{J}_2 +
        \sigma I_2 - \alpha_0 \mathcal{J}_1 \big) \\
        \,
        \\
       -J_2\big[ \sigma I_1 - \alpha_0 \mathcal{J}_2 \big] + (J_1+ J_3)\big[ 
        \sigma I_2 - \alpha_0 
        \mathcal{J}_1\big] 
    \end{pmatrix}    
\end{align}
We first will study the equation for $\mathcal{J}_1+ \mathcal{J}_2$:
\begin{align*}
   \mathcal{J}_1+ \mathcal{J}_2 &= \frac{1}{\alpha_0\big(-J_1+J_2- J_3\big)}  \Big( -J_2\big[ \sigma I_1 - \alpha_0 \mathcal{J}_2 \big] + (J_1+ J_3)\big[ 
        \sigma I_2 - \alpha_0         \mathcal{J}_1\big]\Big)\\
        &= \sigma \frac{(J_1+J_3) I_2 -J_2 I_1 }{\alpha_0\big(-J_1+J_2- J_3\big)} + \frac{J_2\mathcal{J}_2 - (J_1+J_3) \mathcal{J}_1 }{\big(-J_1+J_2- J_3\big)},
\end{align*}
and by regrouping the terms we have
\begin{align}
  \mathcal{J}_1 \, \Big[ 1 + \frac{J_1+J_3}{\big(-J_1+J_2- J_3\big)}\Big] &+ \mathcal{J}_2   \, \Big[ 1 - \frac{J_2}{\big(-J_1+J_2- J_3\big)}\Big] \nonumber\\
  &= \sigma \frac{(J_1+J_3)  }{\alpha_0\big(-J_1+J_2- J_3\big)} \, I_2 - \sigma \frac{J_2  }{\alpha_0\big(-J_1+J_2- J_3\big)}\, I_1\nonumber\\
  &=: \frac{\sigma}{\alpha_0} Z_2 \, I_2 - \frac{\sigma}{\alpha_0} Z_1 \, I_1 .
\end{align}
It is possible to show that $\E(Z_1)$ and $\E(Z_2)$ are both finite and in addition, $\E(I_1)=\E(I_2)=0$ for all $T>0$, this implies in particular that, for  $k=1,2$, $I_k$ converges to zero in probability, when $T\rightarrow \infty$. We now will show that, for  $k=1,2$, $Z_k I_k$ converges, in $L^1$, to zero, and thus in probability when $T \rightarrow \infty$.
First, observe that with probability one, the random variable $|Z_k|$ is bounded by 1, thus we have that $Z_k I_k < I_k$, thus, we have that 
$$
\Pb\big(Z_k I_k > b \big) = \Pb\big(I_k > b \big) \le \frac{1}{b} \E(I_k) =0
$$
for all $b>0$ and all $T>0$.
This implies, that 

$$
\Pb\big(Z_k I_k \le b \big) =1, \quad \mbox{for all $T>0$ and $b>0$}.
$$

Thus, by using Theorem 4 in page 354 of \cite{{Gr-St-20}}, we conclude that $Z_k I_k$ converges to zero, in $L^1(\Omega)$, when $T\rightarrow \infty$. In particular, it converges to zero in probability.
Therefore, we have showed that for all $T>0$
\begin{align*}
\E\Bigg(  \mathcal{J}_1 \, \Big[ 1 + \frac{J_1+J_3}{\big(-J_1+J_2- J_3\big)}\Big] &+ \mathcal{J}_2   \, \Big[ 1 - \frac{J_2}{\big(-J_1+J_2- J_3\big)}\Big] \Bigg)= 0
\end{align*}
But, since the last expectation is zero for all $T>0$, thus, it must be zero for both expressions, meaning we have for all $T>0$
\begin{align}\label{unbiased_m0}
\E\Bigg(  \mathcal{J}_1 \, \Big[ 1 + \frac{J_1+J_3}{\big(-J_1+J_2- J_3\big)}\Big]\Bigg)= 0, \quad  \E\Bigg( \mathcal{J}_2   \, \Big[ 1 - \frac{J_2}{\big(-J_1+J_2- J_3\big)}\Big] \Bigg)&= 0,
\end{align}
and by recalling that the denominator $-J_1+J_2- J_3 >0 $ we deduce that 
\begin{align}\label{unbiased_m}
    \E \big( \mathcal{J}_1 )=0, \quad \E \big( \mathcal{J}_2 )=0,\quad \mbox{for all $T>0$ }
\end{align}
Then, in particular, we have that 

$$
\E\Big(\mathcal{J}_2(T)\Big)= \E\Big( \int_0^T (1-X_t^{\hat{m}}) (X_t^{m_0}- X_t^{\hat{m}}) \,  dt\Big)=0, \mbox{for all $T>0$ }
$$

Thus, from this we deduce that $\hat{m}$ converges to $m_0$ in $L^1$ and thus in probability, when $T \rightarrow \infty$.

We now prove the consistency for the parameter $\alpha$. We have that 

\begin{align}\label{alpha_decom}
    \hat{\alpha} - \alpha_0 &= \frac{1}{\big(-J_1+J_2- J_3\big)}   \big(-\sigma I_1 + \alpha_0 \mathcal{J}_2 +
        \sigma I_2 - \alpha_0 \mathcal{J}_1 \big)\nonumber\\
        &= \frac{  \sigma }{\big(-J_1+J_2- J_3\big)}   \big(I_2 -   I_1 \big)  + \frac{\alpha_0 }{\big(-J_1+J_2- J_3\big)}  \big(  \mathcal{J}_2  - \mathcal{J}_1 \big)
\end{align}
We know that $\big(-J_1+J_2- J_3\big)>0$, thus the denominator is bounded for all $T>0$. This implies that there exists a positive constant $C>0$ such that  
$$
\sup_{T>0}\frac{1}{\big(-J_1+J_2- J_3\big)}  <C \quad\mbox{with probability $1$}.
$$

Thus, for any $b>0$ and $T>0$
\begin{align*}
\Pb\Bigg(    \frac{  \sigma }{\big(-J_1+J_2- J_3\big)}   \big(I_2 -   I_1 \big)\, > b \Bigg) \le \frac{1}{b}  \E\Bigg(    \frac{  \sigma }{\big(-J_1+J_2- J_3\big)}   \big(I_2 -   I_1 \big)\Bigg) = 0 
\end{align*}

Thus, by using again the Theorem 4 in page 354 of \cite{{Gr-St-20}}, we conclude that 
$$
\frac{  \sigma }{\big(-J_1+J_2- J_3\big)}   \big(I_2 -   I_1 \big) 
$$
 converges to zero, in $L^1(\Omega)$, when $T\rightarrow \infty$.

Now, by recalling the convergence \eqref{unbiased_m0} we conclude that the second term in right hand of \eqref{alpha_decom} converges to zero, in $L^1(\Omega)$, when $T\rightarrow \infty$. This implies the consistency of $\hat{\alpha}$.
 
\end{proof}

\end{document}